\documentclass[11pt]{article}

\usepackage[margin=1in]{geometry}
\usepackage{amsmath,amssymb,amsfonts,amsthm,mathtools}
\usepackage{booktabs}
\usepackage{tabularx}
\usepackage{enumitem}
\usepackage{microtype}
\usepackage{hyperref}
\usepackage{tikz}
\usepackage{pgfplots}
\usepackage{authblk}
\usepackage{xcolor}
\usepackage{caption}
\usepackage{bm}
\usetikzlibrary{arrows.meta,positioning,calc,decorations.pathreplacing}
\pgfplotsset{compat=1.18}

\newtheorem{definition}{Definition}
\newtheorem{proposition}{Proposition}
\newtheorem{theorem}{Theorem}
\newtheorem{lemma}{Lemma}
\newtheorem{corollary}{Corollary}
\newtheorem{remark}{Remark}

\newcommand{\E}{\mathbb{E}}

\newcommand{\1}{\mathbf{1}}
\newcommand{\pos}[1]{\left[#1\right]_+}
\newcommand{\Enc}{\mathrm{enc}}
\newcommand{\Pub}{\mathrm{pub}}
\newcolumntype{Y}{>{\raggedright\arraybackslash}X}

\title{\textbf{Reveal, Correct, Then Pay}\\
\large Encrypted Mempools and Perpetual Funding Security}
\author{Benjamin Marsh}
\affil{Sei Labs and University of Portsmouth}
\date{July 2026}

\begin{document}
\maketitle

\begin{abstract}
Encrypted mempools are designed to hide transaction contents until execution order is fixed,
preventing many victim dependent forms of maximal extractable value. This paper studies a different
class of attack in the form of \emph{self-authored state manipulation}, in which the attacker knows
its own transaction and owns a downstream claim on the state that transaction changes.  Perpetual
futures funding is a canonical example.  The funding signal determines a transfer rate, while
receiving side open interest is the transfer base.  In a commit then reveal mempool, an adaptive
corrective transaction cannot enter the already committed batch.  Privacy can therefore create an
economic reaction gap even when cryptographic decryption overhead is negligible. We microfound
correction through executable arbitrage opportunities.  Correctors choose order size against local
price impact and inventory cost, while the protocol information schedule determines which
opportunities are actionable.  The ordering barrier removes ordinary adaptive searchers from the
closed stage.  It therefore yields a closed stage correction rate $\rho_c$ below the adaptive
correction rate $\rho_p$ whenever positive adaptive capacity becomes available after reveal.  The
distortion entering a funding window is multiplied by an explicit response factor
$g(\delta;\rho_c,\rho_p,W)$, where $\delta$ is the first adaptive reaction gap.  We show that $g$ is
strictly increasing in $\delta$ under the derived rate ordering. For an attacker with endowed
receiving notional, the exact expected value of manipulation is quadratic in $m g$, where $m$ is the fraction
of targeting ability retained under privacy.  Encryption is harmful precisely when the loss of
attacker information is smaller than the increase in distortion persistence. We then add hidden tax
base entry.  Transaction privacy can also reduce capitalization of predictable funding into entry
prices, producing a second amplification channel.  The resulting local security index separates
attacker blindness, correction shielding, and capitalization shielding.
\end{abstract}

\section{Introduction}

Encrypted mempools address a real defect of transparent blockchains where a pending transaction can
reveal a user's intent before execution, allowing a validator, builder, or searcher to front-run,
back-run, sandwich, censor, or otherwise condition on the transaction \cite{DaianEtAl2020,QinZhouGervais2022,HeimbachWattenhofer2022,YangEtAl2024,WernerEtAl2022,QinEtAl2021,McLaughlinEtAl2023,QinEtAl2023Imitation}. Threshold encryption systems therefore aim to keep transaction contents secret until inclusion and ordering are fixed \cite{BebelOjha2022,ChoudhuriEtAl2024}.  Recent work has improved the cryptographic and systems
costs of this approach through batched decryption, silent setup, randomized permutation, and direct
integration with high performance BFT protocols \cite{KavousiEtAl2023,BormetEtAl2025,FernandoEtAl2025}.

The standard security story is compelling for \emph{victim dependent} attacks.  A sandwich attack requires the attacker to see another user's trade.  Hide that trade, and the attacker's informational advantage can disappear.  But not all economically relevant attacks have a victim transaction.  In a \emph{self-authored} attack, the adversary knows the transaction that changes application state because it authored that transaction.  The adversary may also hold a claim whose payoff is increasing in the resulting state.  Hiding the transaction then has two opposing effects: it can make the attack less precise, but it can also hide the attack from the agents who would correct it. Perpetual futures funding provides a clean case study.  Perpetuals have no expiry, so periodic funding payments are used to keep the perpetual price near spot \cite{AngerisEtAl2023,AckererEtAl2024, HeEtAl2024}.  Economically, the funding signal sets a transfer rate and receiving side open interest sets the transfer base.  An attacker can hold or acquire receiving notional, distort a mark, index, impact price, or other funding input, and collect the induced transfer.  The attack is not necessarily a reordering attack.  It can be a state manipulation attack whose profitability depends on how quickly arbitrageurs observe and correct the distortion. The key distinction is between \emph{cryptographic latency} and \emph{economic reaction latency}. Suppose an encrypted batch is ordered, then decrypted, then executed.  Even if decryption adds only a few milliseconds, an arbitrageur who learns the plaintext after the order is committed cannot submit a transaction contingent on that plaintext into the same closed batch.  The relevant delay is the time until the first adaptive corrective action can execute.  We call this the \emph{reaction gap}.  It can be one block or one application round even when decryption itself is fast.  This is a cross-layer composition issue, not a failure of the encryption primitive.

This paper makes four contributions.

\begin{enumerate}[leftmargin=1.25cm,label=(\roman*)]
\item \textbf{The ordering barrier.}  We isolate a causal property of commit then reveal designs where
without post-commit admission or precommitted contingent programs, a transaction computed from a
revealed plaintext cannot execute in the committed batch.  This produces a discrete economic reaction
gap independently of decryption computation time.

\item \textbf{Reaction gap manipulation.}  We derive $\rho_c$ and $\rho_p$ from a correction sector
with executable opportunity arrival, local price impact, and inventory cost.  The ordering barrier
implies $\rho_p>\rho_c$ whenever ordinary adaptive correction has positive capacity.  We then derive
the exact amount of a one shot distortion that enters a funding window.  For endowed receiving
notional, this gives a closed form expected attack value and an exact threshold for whether privacy helps or
hurts.

\item \textbf{A two ledger privacy multiplier.}  Funding security depends on both the signal and the
tax base.  We model entry price capitalization of predictable funding and show how hiding the attack
can reduce that capitalization.  A local security index decomposes the effect of encryption into
attacker blindness, correction shielding, and capitalization shielding.

\item \textbf{A decryption aware funding rule.}  We propose the sequence
\emph{reveal, correct, measure, then pay}.  The rule combines a post-reveal correction buffer with
tax base eligibility or time aligned notional accrual.  Under a one-shot hidden impulse, we derive an
explicit buffer sufficient to bound incremental dollar extraction.
\end{enumerate}

Encrypted mempools protect users against important classes of victim dependent MEV.  They do not universally reduce every notion of economic manipulation.  Our result applies when (i) a hidden transaction can move an input to a downstream transfer, (ii) the attacker retains enough information to author that transaction, and (iii) privacy slows or suppresses corrective response or capitalization.  It does not apply in the same form when the application uses a genuinely exogenous, non-manipulable signal or supplies an application native reaction phase.

\section{Related work}
\label{sec:related}

\paragraph{Encrypted mempools.}
Ferveo defines mempool privacy by keeping transactions encrypted until inclusion is finalized and by coupling inclusion to decryption and execution \cite{BebelOjha2022}.  Choudhuri, Garg, Piet, and Policharla formalize batched threshold public key encryption and show how pending transaction privacy can be obtained with communication independent of batch size \cite{ChoudhuriEtAl2025}.  BEAST-MEV combines batched decryption with silent setup \cite{BormetEtAl2025}, while TrX integrates encrypted mempools with a high performance BFT pipeline and reports low proposal to execution overhead \cite{FernandoEtAl2025}.  BlindPerm observes that encryption alone does not eliminate all producer side ordering power and adds randomized permutation after commitment \cite{KavousiEtAl2023}.  These papers primarily study confidentiality, ordering, setup, availability, and systems cost.  We assume those cryptographic goals are met and study the downstream application's transfer rule. Related constructions include FairBlock's minimal overhead front-running prevention \cite{MomeniEtAl2023}, F3B's per-transaction protection \cite{ZhangEtAl2023F3B}, one time setup batched threshold encryption \cite{ChoudhuriEtAl2025}, and epochless batched threshold encryption \cite{BormetEtAl2025BEAT}. These rely on threshold encryption and encryption to the future primitives \cite{CampanelliEtAl2022,DottlingEtAl2023,GargEtAl2024,BonehPartapRotem2024}. A complementary line of work enforces ordering fairness at the consensus layer rather than through encryption \cite{KelkarEtAl2020,Kursawe2020,KelkarDebKannan2022,KelkarEtAl2023Themis,MalkhiSzalachowski2023}.

\paragraph{Economic effects of privacy.}
Rondelet and Kilbourn argue that private mempools must be evaluated through an economic lens, including congestion, information asymmetry, price discovery, and liquidation efficiency \cite{RondeletKilbourn2023}.  In particular, they note that a transparent mempool can allow a liquidation triggered by a price moving transaction to occur in the same block, whereas hidden transaction contents may defer the liquidator's response to a later block.  Our reaction gap model formalizes the analogous mechanism for self-authored state manipulation and derives dollar security conditions.  V0LVER is an example of an application aware design as it combines encrypted orders with an explicit price update and batch clearing mechanism rather than treating encryption as a complete market design \cite{McMenaminDaza2023}. 

\paragraph{Perpetual funding and market design.}
The perpetual futures literature studies no-arbitrage pricing, funding cash flows, and the role of funding in anchoring the perpetual to spot \cite{AngerisEtAl2023,AckererEtAl2024,HeEtAl2024}.  We use only the local transfer structure, a measured premium changes the payment rate, and receiving notional scales the dollar transfer.  Frequent batch auctions show more broadly that a discrete reaction and clearing stage can replace competition on raw speed with competition on price \cite{BudishEtAl2015}. Our proposal is narrower and we claim only that after decryption, an oracle linked application should expose a correction or clearing stage before irreversible measurement and payment. The economics of AMM pricing \cite{FritschEtAl2022} and of batching trades against an AMM \cite{CanidioFritsch2023} are also relevant to how corrective arbitrage prices distortion.

\paragraph{Difference from prior work.}
The cryptographic literature asks whether pending transactions remain confidential until ordering is fixed and whether the resulting protocol is efficient and available. Our paper asks a separate composition question, after a correctly private batch changes application state, does the application immediately pay against that state, or does it first allow the market to correct and capitalize the new information? Automated economic security analysis of smart contracts exists \cite{BabelEtAl2023}, but we are not aware of prior work deriving this reaction gap and tax base interaction for perpetual funding.

\section{Protocol and funding model}
\label{sec:model}

\subsection{Commit, reveal, execute, react}

Consider a batch $B$.  At time $t_c$, the protocol closes admission and fixes the batch's transaction set and execution order.  At time $t_r\ge t_c$, the relevant plaintexts become available.  Transactions are then executed, possibly immediately.  We focus on the common design in which no new transaction can be admitted to $B$ after $t_c$.

\begin{definition}[Adaptive corrective transaction]
Given a revealed transaction $x$, an adaptive corrective transaction is a transaction $c(x)$ whose payload or execution parameters depend nontrivially on the plaintext or execution effect of $x$. Examples include a backrun that restores an AMM price to an external reference, a liquidation triggered by a revealed state change, or an order that offsets a revealed funding signal distortion.
\end{definition}

\begin{lemma}[Ordering barrier]
\label{lem:barrier}
Suppose the admissible transaction set and order of batch $B$ are fixed before the plaintext of $x$ is revealed, and the protocol provides neither post-commit admission nor a precommitted contingent program that implements $c(x)$.  Then an adaptive corrective transaction computed after reveal cannot execute in $B$.  Its earliest execution is in a later batch or application reaction stage.
\end{lemma}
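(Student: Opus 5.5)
The plan is a short causal (time-ordering) argument that uses only the definitions in Section~\ref{sec:model}. First I will formalize the objects. The contents of batch $B$ are the pair $(S_B,\pi_B)$, the admissible transaction set and its execution order, and both are determined at $t_c$. Any transaction that executes in $B$ must be an element of $S_B$; this is simply what it means to execute in $B$ under the hypothesis that there is no post-commit admission. Next, I will write $\mathcal{F}_t$ for the information available to a corrector at time $t$. The plaintext of $x$ is measurable only with respect to $\mathcal{F}_{t}$ for $t\ge t_r$, and by hypothesis the plaintext is revealed after the set and order are fixed, so $t_c\le t_r$.

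The core step is a contradiction argument. Suppose an adaptive corrective transaction $c(x)$, computed after reveal, executes in $B$. Then $c(x)\in S_B$, so its payload and execution parameters were fixed at some time $t\le t_c$, and they are therefore $\mathcal{F}_{t_c}$-measurable. A transaction in $S_B$ could depend on the plaintext of $x$ in only two ways. Either it entered after $t_c$, which is excluded because there is no post-commit admission. Or it was submitted before $t_c$ as code whose effect is evaluated on the revealed state. The second case is exactly a precommitted contingent program implementing $c(x)$, and it is also excluded. So every member of $S_B$ has parameters that do not depend nontrivially on the plaintext of $x$. That contradicts the definition of an adaptive corrective transaction. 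Hence $c(x)\notin S_B$.

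For the second claim, after $t_r$ the corrector can submit $c(x)$ only to an admission window that is still open. Because $B$'s window closed at $t_c$, the earliest such window belongs to a later batch or to a reaction stage that the application exposes. The earliest execution time of $c(x)$ is therefore at least the start of that stage, and this is what defines the reaction gap $\delta$.

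I expect the main obstacle to be the second case of the core step: a transaction whose bytes are fixed before $t_c$ but whose effect depends on $x$ at execution time, such as an on-chain conditional backrun. Saying that such a transaction's payload ``depends nontrivially on the plaintext'' is ambiguous. I would handle this by stating precisely that any execution-time dependence of an $S_B$ member on $x$ is, by definition, a precommitted contingent program. The lemma's exclusion then covers it, and the remaining dependence of ordinary transactions has to come from information chosen off-chain, which is $\mathcal{F}_{t_c}$-measurable.
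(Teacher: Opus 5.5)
This is correct and is the same causal time-ordering argument as the paper's one-line proof: by the time $c(x)$ can be computed, admission to $B$ has closed. Your version adds something useful by showing explicitly where the no-contingent-program hypothesis enters, which the paper leaves implicit.

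One small fix: state the measurability step using information strictly preceding reveal, as the lemma's hypothesis provides, rather than via $\mathcal{F}_{t_c}$ with $t_c\le t_r$. As written, the step fails when decryption is instantaneous ($t_r=t_c$), and that fast-decryption case is exactly the one the lemma is meant to cover.
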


\begin{proof}
At the time $c(x)$ can first be computed, admission to $B$ has closed and the contents of $B$ are fixed. Therefore $c(x)\notin B$.  The conclusion is independent of how quickly the plaintext is decrypted.
\end{proof}

The purpose of the previous lemma is to separate a scheduling fact from a cryptographic benchmark.
Faster threshold decryption can reduce $t_r-t_c$ and materially improve latency
\cite{XiangEtAl2024,FernandoEtAl2025}.  It does not by itself reopen a committed batch. Protocol
native arbitrage, conditional intents, or a reveal then clear phase can remove the barrier.  We return
to these designs in Section~\ref{sec:design}. Let $\delta\geq0$ denote the time from the attack's
state effect to the first execution opportunity for an adaptive correction.  In a stylized transparent
mempool benchmark, searchers can observe a pending
transaction before order closure and submit a contingent backrun, so $\delta=0$.  This is a
benchmark, not a guarantee, since a public mempool may also have a positive effective reaction gap.
All comparisons below remain valid with an arbitrary public value $\delta_{\Pub}$.  The substantive
condition is that the encrypted design raises the effective reaction gap or lowers correction
intensity.

\subsection{Funding as a two ledger transfer}

Let $y\geq0$ be the initial size of the self-authored signal distortion.  Let $Z_t$ be the realized
adversarial component of a funding signal relative to a no-attack path, with $Z_0=y$, and write
$z_t:=\E[Z_t\mid Z_0=y]$.  A funding event measures the average distortion over a window of length
$W>0$.  Only correction before the end of this window matters.  We therefore replace a larger raw
reaction gap by $\min\{\delta,W\}$ and retain the notation $\delta$, so $0\leq\delta\leq W$.  The
expected adversarial component is
\begin{equation}
  \bar z_W
  :=\E\left[
    \left.
    \frac{1}{W}\int_0^W Z_t\,dt
    \right|Z_0=y
  \right]
  =\frac{1}{W}\int_0^W z_t\,dt,
  \label{eq:average-signal}
\end{equation}
and converts it into an expected per unit funding transfer $\tau\bar z_W$, where $\tau>0$ is local
signal to payment pass through.  We use a risk neutral attacker and work on one favorable sign branch
inside any rate cap.  Caps and global position limits bound the scale of the attack but do not change
the local comparative statics.

The attacker may have two types of receiving notional.

\begin{itemize}[leftmargin=0.7cm]
\item $N_0\ge0$ is \emph{endowed notional}: a position held before the hidden transaction is submitted.
It pays no event specific entry premium.
\item $N\ge0$ is \emph{created notional}: a position opened close enough to the event to exploit the
predictable transfer.  It may pay an adverse entry price and other acquisition costs.
\end{itemize}

This is the two ledger structure.  The signal ledger determines the rate, while the notional ledger
determines the dollar base.  A mechanism that protects only one ledger can remain vulnerable through
the other. Let $a y^2$, where $a>0$, be the local all in cost of creating the initial distortion.  Let
$m\in[0,1]$ be the fraction of targeting ability retained by the attacker under the relevant
information regime.  The parameter $m$ absorbs uncertainty about hidden honest flow, ordering,
transaction success, and the sign of interactions with other concealed transactions.  Victim
dependent attacks can have small $m$ under encryption.  A self-authored attack can retain $m$ close
to one because the attacker still knows its own order and any external reference price.

\section{The reaction gap channel}
\label{sec:reaction}

\subsection{Correction from executable arbitrage}

We first derive the two correction rates from corrector behavior and the protocol information
schedule.  Work on the favorable sign branch $Z\geq 0$.  A corrector that can execute an order of
size $x\geq 0$ reduces the distortion from $Z$ to $Z-\ell x$, where $\ell>0$ is local state impact.
The gross value of traversing the resulting linear discrepancy is
\begin{equation}
  \int_0^x (Z-\ell u)\,du
  =Zx-\frac{\ell}{2}x^2.
  \label{eq:corrector-gross-value}
\end{equation}
A corrector of type $i$ has an additional quadratic cost $k_i x^2/2$, where $k_i\geq0$ captures
fees, inventory risk, and limited balance sheet capacity.  Conditional on an actionable correction
opportunity, it therefore solves
\begin{equation}
  \max_{0\leq x\leq Z/\ell}
  \left\{Zx-\frac{\ell+k_i}{2}x^2\right\}.
  \label{eq:corrector-problem}
\end{equation}
The unique solution and the fraction of distortion removed are
\begin{equation}
  x_i^\ast(Z)=\frac{Z}{\ell+k_i},
  \qquad
  \theta_i:=\frac{\ell x_i^\ast(Z)}{Z}
  =\frac{\ell}{\ell+k_i}\in(0,1].
  \label{eq:corrector-response}
\end{equation}

Let $\mathcal I$ be a finite set of corrector types.  Potential opportunities for type $i$ arrive
according to an independent Poisson process with intensity $\nu_i\geq0$.  In stage
$s\in\{c,p\}$, an opportunity is actionable with probability $\alpha_{i,s}\in[0,1]$.
Actionability requires both an executable protocol opportunity and enough information to select the
correct signed order.  A corrector abstains when either condition fails.  Thus the usable
opportunities are a thinned Poisson process with intensity $\nu_i\alpha_{i,s}$.

\begin{proposition}[Correction rate from market primitives]
\label{prop:microfounded-rates}
Suppose the primitives above are constant within stage $s$.  Starting from distortion $Z_0=y$, the
expected residual distortion after time $t$ is
\begin{equation}
  \E[Z_t\mid Z_0=y]=y e^{-\rho_s t},
  \qquad
  \rho_s=\sum_{i\in\mathcal I}
  \nu_i\alpha_{i,s}\frac{\ell}{\ell+k_i}.
  \label{eq:microfounded-rate}
\end{equation}
In particular, $\rho_p\geq\rho_c$ whenever
$\alpha_{i,p}\geq\alpha_{i,c}$ for every type.  The inequality is strict whenever this comparison is
strict for a type with $\nu_i>0$.
\end{proposition}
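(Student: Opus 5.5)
The plan is to treat $Z_t$ as a pure-jump Markov process driven by superposed thinned Poisson clocks, and to derive a linear ODE for its conditional mean. By the independence and thinning assumptions, the usable opportunities of type $i$ form a Poisson process with intensity $\nu_i\alpha_{i,s}$. The superposition over $i\in\mathcal I$ is therefore a Poisson process with total intensity $\Lambda_s=\sum_i\nu_i\alpha_{i,s}$. At each event the acting type is $i$ with probability $\nu_i\alpha_{i,s}/\Lambda_s$, independently of the past.

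The first step is the per-event effect. By \eqref{eq:corrector-response}, a type-$i$ corrector facing distortion $Z$ chooses $x_i^\ast(Z)=Z/(\ell+k_i)$. This is interior, since $Z/(\ell+k_i)\le Z/\ell$, and the problem is strictly concave. The distortion therefore jumps multiplicatively from $Z$ to $(1-\theta_i)Z$. The map is linear in $Z$, which is what makes the mean dynamics closed. Between events $Z$ is constant, since no other motion is modeled in stage $s$.

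Next I compute the mean directly rather than through generators. Condition on the number $n$ of usable events in $[0,t]$, which is Poisson with mean $\Lambda_s t$, and on the i.i.d.\ marks. This gives
\[
\E[Z_t\mid Z_0=y]=y\,\E\Big[\prod_{j=1}^{n}(1-\theta_{I_j})\Big]=y\sum_{n\ge0}e^{-\Lambda_s t}\frac{(\Lambda_s t)^n}{n!}\,\bar q^{\,n},
\qquad \bar q=\sum_i\frac{\nu_i\alpha_{i,s}}{\Lambda_s}(1-\theta_i).
\]
Summing the series yields $y\exp(-\Lambda_s t(1-\bar q))=y\exp\big(-t\sum_i\nu_i\alpha_{i,s}\theta_i\big)$, which is \eqref{eq:microfounded-rate}. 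If $\Lambda_s=0$ the result is trivially $y$, consistent with $\rho_s=0$. As a cross-check, Dynkin's formula gives $\frac{d}{dt}z_t=-\sum_i\nu_i\alpha_{i,s}\theta_i z_t$. Boundedness $0\le Z_t\le y$ justifies exchanging expectation and sum.

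For the comparison, $\rho_s$ is a nonnegative linear combination of the $\alpha_{i,s}$ with weights $\nu_i\theta_i\ge0$. Hence $\alpha_{i,p}\ge\alpha_{i,c}$ for all $i$ gives $\rho_p\ge\rho_c$ termwise. For strictness, $\theta_i\in(0,1]$ is strictly positive because $\ell>0$ and $k_i<\infty$. A strict inequality for some type with $\nu_i>0$ therefore contributes a strictly positive difference $\nu_i\theta_i(\alpha_{i,p}-\alpha_{i,c})$.

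The main subtlety is justifying that each usable opportunity acts on the current $Z$ with the same constant fraction $\theta_i$, and that opportunities do not interact within an instant. Simultaneous events have probability zero under independent Poisson clocks, so this causes no trouble. I would also state explicitly that a corrector abstaining on non-actionable opportunities leaves $Z$ unchanged, which the thinning construction already encodes.
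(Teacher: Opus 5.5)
Your proof is correct and is essentially the paper's argument: the paper keeps the per-type thinned counts $N_{i,s}(t)\sim\operatorname{Pois}(\nu_i\alpha_{i,s}t)$ separate and uses independence plus the Poisson generating function to get $y\prod_i\E[(1-\theta_i)^{N_{i,s}(t)}]=y\prod_i e^{-\nu_i\alpha_{i,s}\theta_i t}$. You instead superpose the clocks into one marked Poisson process and apply the same generating-function identity with the averaged factor $\bar q$, which is equivalent; the rate comparison is termwise in both, as you do it (your word ``interior'' is slightly off when $k_i=0$, where $x_i^\ast=Z/\ell$ lies on the boundary, but the unconstrained maximizer is feasible, so nothing changes).
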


\begin{proof}
An actionable type $i$ opportunity multiplies the residual distortion by $1-\theta_i$.  Let
$N_{i,s}(t)$ be the number of such opportunities by time $t$.  Poisson thinning gives
$N_{i,s}(t)\sim\operatorname{Pois}(\nu_i\alpha_{i,s}t)$.  Independence and the probability
generating function of a Poisson random variable give
\begin{align}
  \E[Z_t\mid Z_0=y]
  &=y\prod_{i\in\mathcal I}
    \E\left[(1-\theta_i)^{N_{i,s}(t)}\right] \\
  &=y\prod_{i\in\mathcal I}
    \exp\left(-\nu_i\alpha_{i,s}\theta_i t\right)
   =y e^{-\rho_s t}.
\end{align}
The rate comparison follows term by term from \eqref{eq:microfounded-rate}.
\end{proof}

The ordering barrier gives the actionability comparison a direct protocol interpretation.

\begin{corollary}[Ordering induced rate gap]
\label{cor:ordering-rate-gap}
Partition the correction sector into precommitted types $\mathcal C$ and ordinary adaptive types
$\mathcal A$.  A precommitted type has closed stage coverage $q_i\in[0,1]$.  An ordinary adaptive
type has zero closed stage actionability.  Normalize actionability to one once public adaptive
execution resumes.  Then
\begin{align}
  \rho_c
  &=\sum_{i\in\mathcal C}\nu_iq_i\theta_i,
  \label{eq:rho-c-derived}\\
  \rho_p
  &=\sum_{i\in\mathcal C}\nu_i\theta_i
    +\sum_{i\in\mathcal A}\nu_i\theta_i,
  \label{eq:rho-p-derived}\\
  \rho_p-\rho_c
  &=\sum_{i\in\mathcal C}\nu_i(1-q_i)\theta_i
    +\sum_{i\in\mathcal A}\nu_i\theta_i.
  \label{eq:rho-gap-derived}
\end{align}
Hence $\rho_p>\rho_c$ if some precommitted coverage is incomplete or if ordinary adaptive correction
has positive capacity.
\end{corollary}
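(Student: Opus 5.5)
The plan is to specialize Proposition~\ref{prop:microfounded-rates}. It already gives, for each stage $s\in\{c,p\}$, the formula $\rho_s=\sum_{i\in\mathcal I}\nu_i\alpha_{i,s}\theta_i$ with $\theta_i=\ell/(\ell+k_i)$. So the only work is to substitute the actionability profile described in the corollary, stage by stage, into \eqref{eq:microfounded-rate}. First I partition $\mathcal I=\mathcal C\cup\mathcal A$ as a disjoint union, and I split the sum defining $\rho_s$ accordingly.

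For the closed stage, I set $\alpha_{i,c}=q_i$ for $i\in\mathcal C$. For $i\in\mathcal A$ I set $\alpha_{i,c}=0$, justifying this by Lemma~\ref{lem:barrier}. An ordinary adaptive corrector computes its order only after reveal, so it cannot execute in the committed batch and has no actionable opportunity during the closed stage. Substituting gives \eqref{eq:rho-c-derived}. For the post-reveal stage, the normalization states $\alpha_{i,p}=1$ for all $i$, which gives \eqref{eq:rho-p-derived} directly. Subtracting term by term over $\mathcal C$ and over $\mathcal A$ yields \eqref{eq:rho-gap-derived}.

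For the strict inequality, every summand in \eqref{eq:rho-gap-derived} is nonnegative, because $\nu_i\ge0$, $1-q_i\ge0$, and $\theta_i\in(0,1]$. The difference is therefore strictly positive as soon as one summand is positive. That happens when some $i\in\mathcal C$ has $\nu_i>0$ and $q_i<1$, or some $i\in\mathcal A$ has $\nu_i>0$. Since $\theta_i>0$ always, these are exactly the two conditions in the statement, once ``incomplete coverage'' and ``positive capacity'' are read as applying to a type with $\nu_i>0$. Equivalently, this is the strict clause of Proposition~\ref{prop:microfounded-rates}, because $\alpha_{i,p}=1>\alpha_{i,c}$ for such a type.

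There is no real analytic obstacle. The one point needing care is the modeling step that equates ``ordinary adaptive'' with $\alpha_{i,c}=0$ via the ordering barrier. I would also make explicit that ``positive capacity'' means $\nu_i>0$, since a type with $\nu_i=0$ contributes nothing to either rate.
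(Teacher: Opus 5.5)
Your proposal is correct and is essentially the paper's proof: set $\alpha_{i,c}=q_i$ on $\mathcal C$, $\alpha_{i,c}=0$ on $\mathcal A$ by Lemma~\ref{lem:barrier}, and $\alpha_{i,p}=1$ for every type, then substitute into \eqref{eq:microfounded-rate}. Your observation that the strict inequality requires the relevant type to have $\nu_i>0$ is a correct refinement that the paper's statement leaves implicit.
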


\begin{proof}
By Lemma~\ref{lem:barrier}, an ordinary adaptive type has $\alpha_{i,c}=0$.  The other actionability
parameters are $\alpha_{i,c}=q_i$ for $i\in\mathcal C$ and $\alpha_{i,p}=1$ for every type.  Substitute
these values into \eqref{eq:microfounded-rate}.
\end{proof}

The parameter $q_i$ is the probability that a fixed program both covers the hidden impulse and can
execute after it.  Normalizing later actionability to one is without loss for the benchmark because a
common later execution probability can be absorbed into $\nu_i$.  More generally,
Proposition~\ref{prop:microfounded-rates} only requires later actionability to dominate closed stage
actionability.

In the homogeneous case, let $\nu_C$ and $\nu_A$ be the total opportunity
intensities of the two sectors, let closed stage coverage be $q$, and let every opportunity remove
fraction $\theta$.  Then
\begin{equation}
  \rho_c=q\nu_C\theta,
  \qquad
  \rho_p=(\nu_C+\nu_A)\theta,
  \qquad
  \frac{\rho_c}{\rho_p}
  =q\frac{\nu_C}{\nu_C+\nu_A}.
  \label{eq:rho-ratio-homogeneous}
\end{equation}
Equation~\eqref{eq:rho-ratio-homogeneous} separates coverage from adaptive capacity.  With no
protocol native or precommitted correction, $\nu_C=0$ and therefore $\rho_c=0$.  Equality
$\rho_c=\rho_p$ requires full closed stage coverage and no correction capacity that becomes newly
actionable after reveal.  A protocol native correction module works by moving capacity from
$\mathcal A$ into $\mathcal C$ and raising its coverage.

The exponential law in Proposition~\ref{prop:microfounded-rates} is exact for expected distortion.
It is also the deterministic fluid limit when individual correction fractions shrink and opportunity
intensities grow while $\nu_i\theta_i$ remains fixed.

\begin{remark}[Fixed correction costs]
The constant rate law is an interior local model.  If type $i$ also pays fixed execution cost
$f_i>0$, its optimized variable surplus is $Z^2/[2(\ell+k_i)]$.  It therefore acts only when
\begin{equation}
  |Z|\geq\sqrt{2f_i(\ell+k_i)}.
  \label{eq:corrector-entry-threshold}
\end{equation}
The local correction rate becomes
\begin{equation}
  \rho_s(Z)
  =\sum_{i\in\mathcal I}
  \nu_i\alpha_{i,s}\frac{\ell}{\ell+k_i}
  \1\left\{|Z|\geq\sqrt{2f_i(\ell+k_i)}\right\}.
  \label{eq:state-dependent-rho}
\end{equation}
Correction can slow as the distortion approaches zero.  Nested actionability still gives
$\rho_p(Z)\geq\rho_c(Z)$ pointwise, but the closed form response factor below need not survive.  The
constant rate model applies on a region where the active set, depth, and inventory limits are fixed.
\end{remark}

\subsection{The induced two stage response}

After the attack changes state, there is an interval $[0,\delta)$ in which a correction newly computed
from the revealed transaction cannot yet execute.  Proposition~\ref{prop:microfounded-rates} gives
rate $\rho_c$ during this closed stage and rate $\rho_p$ once adaptive correction can execute.  For a
one shot impulse $y$, the expected response is
\begin{equation}
  z_t=
  \begin{cases}
    y e^{-\rho_c t}, & 0\le t<\delta,\\[3pt]
    y e^{-\rho_c\delta}e^{-\rho_p(t-\delta)}, & \delta\le t\le W.
  \end{cases}
  \label{eq:two speed-dynamics}
\end{equation}
For $\rho\ge0$ and $s\ge0$, write
\begin{equation}
  A(\rho,s):=
  \begin{cases}
    (1-e^{-\rho s})/\rho, & \rho>0,\\
    s, & \rho=0.
  \end{cases}
  \label{eq:A-def}
\end{equation}
Since the funding transfer is linear in the measured signal, expected attack value depends on the
expected average distortion.  This is $\bar z_W=g(\delta;\rho_c,\rho_p,W)y$, where
\begin{equation}
  g(\delta;\rho_c,\rho_p,W)
  =\frac{A(\rho_c,\delta)+e^{-\rho_c\delta}A(\rho_p,W-\delta)}{W}.
  \label{eq:g-general}
\end{equation}

\begin{proposition}[Reaction gap monotonicity]
\label{prop:g-monotone}
Suppose $0\le\delta<W$ and $\rho_p>0$.  Under the correction sector above,
\begin{equation}
  \frac{\partial g}{\partial\delta}
  =
  \frac{e^{-\rho_c\delta}}{W}
  \left(1-\frac{\rho_c}{\rho_p}\right)
  \left(1-e^{-\rho_p(W-\delta)}\right)
  =
  \frac{e^{-\rho_c\delta}}{W}
  \frac{\rho_p-\rho_c}{\rho_p}
  \left(1-e^{-\rho_p(W-\delta)}\right).
  \label{eq:g-derivative}
\end{equation}
Hence $g$ is weakly increasing in the reaction gap.  It is strictly increasing exactly when some
correction capacity becomes more actionable after reveal, so that $\rho_p>\rho_c$.  It is constant
when the closed stage already exposes the full correction sector and $\rho_p=\rho_c$.
\end{proposition}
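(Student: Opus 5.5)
The plan is to differentiate $W g(\delta)=A(\rho_c,\delta)+e^{-\rho_c\delta}A(\rho_p,W-\delta)$ directly in $\delta$, using two elementary facts about the function $A$ from \eqref{eq:A-def}. First, $\partial_s A(\rho,s)=e^{-\rho s}$ for every $\rho\ge0$; the case $\rho=0$ is included since then $A=s$. Second, for $\rho>0$ we have the identity $e^{-\rho s}=1-\rho A(\rho,s)$. Since $\rho_p>0$ by hypothesis, the second identity applies to the post-reveal term. Treating $\rho_c=0$ and $\rho_c>0$ uniformly through the first fact avoids any case split on the closed-stage rate.

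The computation then runs in three steps.

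\begin{enumerate}
\item The first term contributes $\partial_\delta A(\rho_c,\delta)=e^{-\rho_c\delta}$.

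\item By the product rule, the second term contributes
\begin{equation*}
-\rho_c e^{-\rho_c\delta}A(\rho_p,W-\delta)-e^{-\rho_c\delta}e^{-\rho_p(W-\delta)}.
\end{equation*}
Here the chain rule on $W-\delta$ supplies the minus sign.

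\item Substitute $A(\rho_p,W-\delta)=\bigl(1-e^{-\rho_p(W-\delta)}\bigr)/\rho_p$ and factor out $e^{-\rho_c\delta}$. Collecting the terms gives
\begin{equation*}
\bigl(1-e^{-\rho_p(W-\delta)}\bigr)-\frac{\rho_c}{\rho_p}\bigl(1-e^{-\rho_p(W-\delta)}\bigr)=\Bigl(1-\frac{\rho_c}{\rho_p}\Bigr)\bigl(1-e^{-\rho_p(W-\delta)}\bigr).
\end{equation*}
Dividing by $W$ yields \eqref{eq:g-derivative}. The second form is the same expression rewritten.
\end{enumerate}

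For the sign analysis, note that $e^{-\rho_c\delta}/W>0$, and that $1-e^{-\rho_p(W-\delta)}>0$ because $\rho_p>0$ and $\delta<W$. The sign of $\partial g/\partial\delta$ is therefore the sign of $\rho_p-\rho_c$.

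Proposition~\ref{prop:microfounded-rates}, or Corollary~\ref{cor:ordering-rate-gap}, gives $\rho_p\ge\rho_c$ under the nested actionability of the correction sector. So the derivative is nonnegative on $[0,W)$, and $g$ is weakly increasing there. Continuity of $g$ at $\delta=W$ extends this to the closed interval, which covers the convention $\delta\le W$.

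The derivative is strictly positive on $[0,W)$ exactly when $\rho_p>\rho_c$. By \eqref{eq:rho-gap-derived}, this happens precisely when some capacity becomes newly actionable after reveal, in which case $g$ is strictly increasing. When $\rho_p=\rho_c$ the derivative vanishes identically, so $g$ is constant.

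No step is a genuine obstacle; the only care needed is bookkeeping. The two things to watch are the sign from differentiating $W-\delta$ and the $\rho_c=0$ case, and both are handled by the uniform formula $\partial_sA=e^{-\rho s}$. A sanity check I would include: at $\rho_c=\rho_p=\rho$ the formula collapses to $g=A(\rho,W)/W$. This is independent of $\delta$, which confirms the constant case directly.
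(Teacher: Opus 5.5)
Your proof is correct and follows essentially the same route as the paper: differentiate $Wg$ using $\partial_s A(\rho,s)=e^{-\rho s}$ (with the chain-rule sign on $W-\delta$), collect terms, and take the sign from the rate ordering $\rho_p\ge\rho_c$ supplied by Corollary~\ref{cor:ordering-rate-gap}. Your version just spells out what the paper's one-line proof leaves implicit: the product-rule term $-\rho_c e^{-\rho_c\delta}A(\rho_p,W-\delta)$ and the strict positivity of $1-e^{-\rho_p(W-\delta)}$ for $\delta<W$.
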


\begin{proof}
Differentiate \eqref{eq:g-general}.  Since $\partial_\delta A(\rho_c,\delta)=e^{-\rho_c\delta}$ and $\partial_\delta A(\rho_p,W-\delta)=-e^{-\rho_p(W-\delta)}$, collecting terms gives \eqref{eq:g-derivative}.  Equations~\eqref{eq:rho-c-derived} and \eqref{eq:rho-p-derived} give $\rho_p\geq\rho_c$, with strict inequality under the condition in the statement.
\end{proof}

In a technology matched comparison, the public regime exposes the full correction sector from time
zero, while the encrypted regime delays some of that sector until $\delta$.  Hence
\begin{equation}
  g_{\Pub}=\frac{A(\rho_p,W)}{W},
  \qquad
  g_{\Enc}
  =\frac{A(\rho_c,\delta)
    +e^{-\rho_c\delta}A(\rho_p,W-\delta)}{W}.
  \label{eq:matched-response}
\end{equation}
Proposition~\ref{prop:g-monotone} gives
\begin{equation}
  \frac{g_{\Enc}}{g_{\Pub}}
  =\frac{A(\rho_c,\delta)
    +e^{-\rho_c\delta}A(\rho_p,W-\delta)}{A(\rho_p,W)}
  \geq1,
  \label{eq:matched-amplification}
\end{equation}
with strict inequality when $\delta>0$ and $\rho_p>\rho_c$.

\begin{remark}[Correction technologies that are not nested]
The first equality in \eqref{eq:g-derivative} remains valid when later correction technologies do not
contain the closed stage technologies.  In that extension, $\rho_p<\rho_c$ is possible only because
some correction technology available during the closed stage is disabled later.  It is not produced
by the commit then reveal barrier itself.
\end{remark}

The benchmark with no effective closed stage correction has $\mathcal C=\varnothing$, so
$\rho_c=0$.  Write $\rho:=\rho_p$.  Then
\begin{equation}
  g(\delta;0,\rho,W)
  =\frac{1}{W}\left[\delta+\frac{1-e^{-\rho(W-\delta)}}{\rho}\right].
  \label{eq:g-zero-hidden}
\end{equation}
The transparent same batch reaction benchmark is
\begin{equation}
  g(0;0,\rho,W)=\frac{1-e^{-\rho W}}{\rho W}.
  \label{eq:g-public}
\end{equation}
Thus the reaction gap amplification factor is
\begin{equation}
  R_g(\delta)
  :=\frac{g(\delta;0,\rho,W)}{g(0;0,\rho,W)}
  =\frac{\rho\delta+1-e^{-\rho(W-\delta)}}{1-e^{-\rho W}}.
  \label{eq:reaction-amplification}
\end{equation}
If $\rho W\to\infty$ and $\rho(W-\delta)\to\infty$, then
\begin{equation}
  R_g(\delta)=1+\rho\delta+o(1).
  \label{eq:reaction-asymptotic}
\end{equation}
This comparative static is easy to miss, a fixed one block reaction gap can matter most in markets where transparent corrective arbitrage would otherwise be fastest.

\begin{figure}[t]
\centering
\begin{tikzpicture}
\begin{axis}[
  width=0.78\textwidth,
  height=5.4cm,
  xmin=0,xmax=0.25,
  ymin=1,ymax=6.2,
  xlabel={reaction gap $\delta/W$},
  ylabel={$g(\delta)/g(0)$},
  legend pos=north west,
  axis lines=left,
  samples=160,
  scaled ticks=false,
  xtick={0,0.05,0.10,0.15,0.20,0.25},
  xticklabels={0,0.05,0.10,0.15,0.20,0.25},
  grid=major,
  grid style={dotted}
]
\addplot[thick,domain=0:0.25] {(x+(1-exp(-2*(1-x)))/2)/((1-exp(-2))/2)};
\addlegendentry{$\rho W=2$}
\addplot[thick,dashed,domain=0:0.25] {(x+(1-exp(-8*(1-x)))/8)/((1-exp(-8))/8)};
\addlegendentry{$\rho W=8$}
\addplot[thick,dashdotted,domain=0:0.25] {(x+(1-exp(-20*(1-x)))/20)/((1-exp(-20))/20)};
\addlegendentry{$\rho W=20$}
\end{axis}
\end{tikzpicture}
\caption{Reaction gap amplification with no correction during the closed stage.  A short gap has a larger proportional effect when public correction would otherwise be rapid.}
\label{fig:reaction-amplification}
\end{figure}

\subsection{Endowed notional and the information threshold}

An attacker with endowed receiving notional $N_0$ has expected incremental payoff
\begin{equation}
  \Pi_0(y)=mN_0\tau g\,y-a y^2,
  \label{eq:endowed-payoff}
\end{equation}
where $g=g(\delta;\rho_c,\rho_p,W)$.

\begin{proposition}[Endowed notional value]
\label{prop:endowed-value}
The optimal one shot distortion and its expected dollar value are
\begin{equation}
  y^\ast=\frac{mN_0\tau g}{2a},
  \qquad
  V_0=\frac{m^2N_0^2\tau^2g^2}{4a}.
  \label{eq:endowed-solution}
\end{equation}
\end{proposition}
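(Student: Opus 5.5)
The argument is a one-variable concave maximization of the payoff \eqref{eq:endowed-payoff}. The first step is to note that, once the information regime and the protocol schedule are fixed, the attacker's choice $y$ does not affect the response factor $g=g(\delta;\rho_c,\rho_p,W)$. This follows from Proposition~\ref{prop:microfounded-rates}: each corrector removes the fraction $\theta_i$ of the distortion, and $\theta_i$ does not depend on $Z$. The expected path \eqref{eq:two speed-dynamics} is therefore linear in $y$, and so is $\bar z_W=g\,y$. We also take $m$, $N_0$ and $\tau$ to be independent of $y$, which is the local model stated in Section~\ref{sec:model}. Hence
\[
\Pi_0(y)=b\,y-a y^2,\qquad b:=mN_0\tau g\ge 0,
\]
and this is a quadratic in $y$.

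Next comes the optimization over $y\ge 0$, which is the favorable sign branch. Since $a>0$, $\Pi_0$ is strictly concave: $\Pi_0''(y)=-2a<0$. The first-order condition $b-2ay=0$ gives the unique stationary point $y^\ast=b/(2a)$. We have $b\ge 0$ because $m\in[0,1]$, $N_0\ge 0$, $\tau>0$, and $g\ge 0$ (the last holds since $A(\rho,s)\ge 0$). So $y^\ast$ lies in the feasible set, and by strict concavity it is the global maximizer. Substituting gives
\[
V_0=\Pi_0(y^\ast)=\frac{b^2}{2a}-\frac{b^2}{4a}=\frac{b^2}{4a}=\frac{m^2N_0^2\tau^2g^2}{4a}.
\]
In the degenerate case $b=0$, this correctly yields $y^\ast=0$ and $V_0=0$.

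The main point to handle carefully is the claim that $g$ is independent of $y$. This depends on the constant-rate regime. Under the fixed-cost extension in the remark, the rate $\rho_s(Z)$ depends on the state, $g$ would depend on $y$, and the closed form would fail. I would therefore state explicitly that the proposition holds on the region where the active corrector set is fixed, and that $y^\ast$ respects any rate cap or position limit. Within that region no further difficulty arises.
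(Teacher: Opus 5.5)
Your proposal is correct and follows the paper's proof, which maximizes the concave quadratic $\Pi_0(y)=mN_0\tau g\,y-ay^2$ through its first-order condition. Your additional checks are sound and make explicit what the paper leaves implicit: that $g$ does not depend on $y$ in the constant-rate regime, and that $y^\ast\ge 0$ is feasible.
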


\begin{proof}
Maximize the concave quadratic \eqref{eq:endowed-payoff} in $y$.
\end{proof}

Let $j\in\{\Pub,\Enc\}$ index public and encrypted regimes.  Allow the local distortion cost and pass
through to differ across regimes.  Assume the public benchmark has $V_{0,\Pub}>0$, so the ratios below
are defined.  Proposition~\ref{prop:endowed-value} gives
\begin{equation}
  \frac{V_{0,\Enc}}{V_{0,\Pub}}
  =
  \frac{a_{\Pub}}{a_{\Enc}}
  \left(
    \frac{m_{\Enc}\tau_{\Enc}g_{\Enc}}
         {m_{\Pub}\tau_{\Pub}g_{\Pub}}
  \right)^2.
  \label{eq:endowed-ratio-general}
\end{equation}
With unchanged execution cost and application pass through,
\begin{equation}
  V_{0,\Enc}>V_{0,\Pub}
  \quad\Longleftrightarrow\quad
  \frac{m_{\Enc}}{m_{\Pub}}
  >
  \frac{g_{\Pub}}{g_{\Enc}}.
  \label{eq:information-threshold}
\end{equation}
There is therefore no unconditional ``privacy improves manipulation security'' theorem.  Privacy helps when the attacker's lost targeting ability dominates the persistence gained from weaker correction. For a self-authored attack, $m_{\Enc}/m_{\Pub}$ may be close to one, in which case any increase in $g$ raises attack value.  For a victim-dependent attack, the information ratio can be near zero, in which case privacy is strongly protective.

As a synthetic illustration, take $\rho W=8$, no closed stage correction, and $\delta/W=0.1$.  Then $R_g\approx1.80$, so attack value rises by a factor of about $3.24$ when targeting is unchanged. Encryption still raises value whenever the attacker retains more than $1/1.80\approx56\%$ of its public mempool targeting ability.  The numbers are not a venue calibration, they show that a discrete reaction opportunity can dominate modest information loss.

\section{The tax base channel}
\label{sec:taxbase}

Reaction speed prices only the signal ledger. Funding has a second ledger, who owns the receiving notional when the payment is made.  A timestamp based eligibility rule can allow an attacker to enter in the concealed batch after the transfer is largely predictable.  Public observability can make that entry expensive because competing traders bid the predictable funding into the entry price.  Encryption can suppress that capitalization.

\subsection{Visibility and capitalization}

Let
\begin{equation}
  p=q_0y,
  \qquad
  q_0:=m\tau g,
  \label{eq:predictable-payment}
\end{equation}
be the predictable expected incremental funding payment per unit of receiving notional generated by distortion $y$.  Let $e$ be the adverse entry premium paid by a trader entering the receiving side before the funding event.  Introduce an effective visibility parameter $v\in[0,1]$.  The value $v=1$ means that competing traders can condition their entry on the pending manipulation, $v=0$ means that no transaction specific entry can be submitted before the relevant batch closes.

Competing sniping demand and passive entry supply are
\begin{equation}
  D(e;p)=v\kappa_M(p-e)_+,
  \qquad
  S(e)=\kappa_L e,
  \label{eq:entry-market}
\end{equation}
where $\kappa_M\ge0$ measures responsive capital and $\kappa_L>0$ is entry depth.

\begin{proposition}[Visibility dependent capitalization]
\label{prop:capitalization}
For $p>0$, the entry market clears at
\begin{equation}
  e^\ast=\zeta(v)p,
  \qquad
  \zeta(v)=\frac{v\kappa_M}{v\kappa_M+\kappa_L}.
  \label{eq:zeta}
\end{equation}
The capitalization coefficient is increasing in visibility and responsive competing capital, and
decreasing in entry depth.  A newly entered receiver retains only $(1-\zeta)p$ before residual costs.
\end{proposition}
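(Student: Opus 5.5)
The plan is to solve the market-clearing condition $D(e;p)=S(e)$ directly, then read the comparative statics off the closed form for $\zeta$.

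\textbf{Step 1: locate the equilibrium.} First I will show that any clearing premium lies in $[0,p]$.
\begin{itemize}[leftmargin=0.7cm]
\item Supply $S(e)=\kappa_L e$ is negative for $e<0$ and demand is nonnegative, so we may take $e\ge0$.
\item For $e>p$, demand is zero while $S(e)=\kappa_L e>0$ because $\kappa_L>0$. So no clearing price exceeds $p$.
\end{itemize}
On $[0,p]$ the positive part is inactive, and clearing reads $v\kappa_M(p-e)=\kappa_L e$. This is linear in $e$ with solution $e^\ast=v\kappa_M p/(v\kappa_M+\kappa_L)$. The denominator is at least $\kappa_L>0$, so the expression is well defined even when $v\kappa_M=0$. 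Since $\zeta\in[0,1)$, we get $e^\ast\in[0,p)$, which confirms that the solution lies in the region where the positive part is inactive.

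\textbf{Step 2: uniqueness.} The excess supply $S(e)-D(e;p)$ is strictly increasing in $e$, since $\kappa_L>0$ and $D$ is weakly decreasing. Hence the clearing point is unique.

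\textbf{Step 3: comparative statics.} Write $\zeta=1-\kappa_L/(v\kappa_M+\kappa_L)$.
\begin{itemize}[leftmargin=0.7cm]
\item Monotonicity in $v$ and $\kappa_M$ follows because $\zeta$ depends on them only through the product $x=v\kappa_M$, and $x\mapsto x/(x+\kappa_L)$ has derivative $\kappa_L/(x+\kappa_L)^2>0$. Thus $\zeta$ is increasing in $v$ when $\kappa_M>0$, and increasing in $\kappa_M$ when $v>0$; in the degenerate cases it is weakly increasing and identically zero.
\item Monotonicity in entry depth follows from $\partial\zeta/\partial\kappa_L=-x/(x+\kappa_L)^2\le0$.
\end{itemize}
I will state the strict versions under $v\kappa_M>0$.

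\textbf{Step 4: retained payment.} A receiver entering at premium $e^\ast$ collects the predictable payment $p$ per unit, so it nets $p-e^\ast=(1-\zeta)p$ before residual costs.

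\textbf{Main obstacle.} There is no real obstacle here. The only care needed is in Step 1, ruling out the region $e>p$ and the sign issues, so that the positive part can be dropped, and in handling the boundary case $v\kappa_M=0$, where monotonicity holds only weakly.
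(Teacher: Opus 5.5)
Your proposal is correct and takes the same route as the paper: solve the clearing condition $v\kappa_M(p-e)=\kappa_L e$ on the positive branch. You also supply details the paper's one-line proof omits: that the clearing premium is unique and lies in $[0,p)$, an explicit derivation of the comparative statics, and the observation that the monotonicity is strict only when $v\kappa_M>0$.
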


\begin{proof}
On the positive branch, market clearing gives $v\kappa_M(p-e)=\kappa_Le$.  Solving for $e$ yields
\eqref{eq:zeta}.
\end{proof}

The parameter $v$ is not synonymous with raw cryptographic leakage.  It is the fraction of the transaction specific opportunity that can be converted into a competing order before the eligibility
cutoff.  A decrypt then clear mechanism can have high effective $v$ even if the pre-commit mempool is perfectly private. A serial order then reveal execute mechanism can have $v\approx0$ for same batch
entry even when decryption itself is immediate.

\subsection{The two ledger security index}

Let $N$ be notional created for the event.  Its local acquisition, carry, adverse self-impact, and unwind cost is summarized by $bN^2$, $b>0$.  Endowed notional $N_0$ receives the full predictable payment, while created notional receives the residual after capitalization.  Define
\begin{equation}
  q:=(1-\zeta)q_0=(1-\zeta)m\tau g.
  \label{eq:q-created}
\end{equation}
The expected local payoff is
\begin{equation}
  \Pi(N,y)=q_0N_0y+qNy-a y^2-bN^2.
  \label{eq:two-ledger-payoff}
\end{equation}

\begin{theorem}[Two ledger amplification]
\label{thm:two-ledger}
Define the local tax base security index
\begin{equation}
  \Gamma:=\frac{q^2}{4ab}
  =
  \frac{(1-\zeta)^2m^2\tau^2g^2}{4ab}.
  \label{eq:Gamma}
\end{equation}
If $\Gamma<1$, the unique optimum is
\begin{equation}
  y^\ast=\frac{2bq_0N_0}{4ab-q^2},
  \qquad
  N^\ast=\frac{qq_0N_0}{4ab-q^2},
  \label{eq:two-ledger-controls}
\end{equation}
with value
\begin{equation}
  V=\frac{q_0^2N_0^2}{4a(1-\Gamma)}.
  \label{eq:two-ledger-value}
\end{equation}
If $\Gamma>1$, the quadratic model has a positive homogeneous attack direction and is locally
scalable until a cap, budget, position limit, or global impact curve binds.  At $\Gamma=1$, the
homogeneous quadratic vanishes along a nonzero ray.  If $q_0N_0>0$, endowed notional creates
unbounded linear gain along that ray.  If $q_0N_0=0$, payoff is zero along the ray rather than
unbounded.
\end{theorem}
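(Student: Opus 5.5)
The plan is to treat \eqref{eq:two-ledger-payoff} as a quadratic form in $(y,N)$ plus a linear term. Write $\Pi(N,y)=q_0N_0y-Q(y,N)$, where
\[
Q(y,N)=ay^2-qNy+bN^2 .
\]
The associated symmetric matrix is $\begin{pmatrix} a & -q/2\\ -q/2 & b\end{pmatrix}$. Its determinant is $ab-q^2/4=ab(1-\Gamma)$, and its trace is $a+b>0$. So $Q$ is positive definite exactly when $\Gamma<1$, indefinite when $\Gamma>1$, and positive semidefinite with a one-dimensional kernel when $\Gamma=1$. The whole theorem will be read off from this trichotomy. Throughout I will work on the domain implied by the model, and check at the end that the interior optimum has $y^\ast,N^\ast\ge0$.

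\textbf{Case $\Gamma<1$.} Here $\Pi$ is strictly concave, so a stationary point is the unique global maximizer. The first-order conditions are
\[
\partial_y\Pi=q_0N_0+qN-2ay=0,\qquad \partial_N\Pi=qy-2bN=0 .
\]
The second gives $N=qy/(2b)$. Substituting into the first gives $y(2a-q^2/(2b))=q_0N_0$, hence $y^\ast=2bq_0N_0/(4ab-q^2)$ and $N^\ast=qq_0N_0/(4ab-q^2)$, matching \eqref{eq:two-ledger-controls}. Both are nonnegative because $q,q_0,N_0\ge0$ and $4ab-q^2>0$, so the constraint $N\ge0$ does not bind.

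For the value I will use the standard identity for a concave quadratic at its stationary point: the value equals half the linear term evaluated there. This gives
\[
V=\tfrac12 q_0N_0y^\ast=\frac{bq_0^2N_0^2}{4ab-q^2}=\frac{q_0^2N_0^2}{4a(1-\Gamma)},
\]
after dividing numerator and denominator by $b$. This recovers Proposition~\ref{prop:endowed-value} when $q=0$.

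\textbf{Case $\Gamma>1$.} I will exhibit an explicit direction. Take $(y,N)=t(\sqrt b,\sqrt a)$ with $t>0$. Then
\[
Q=t^2\bigl(2ab-q\sqrt{ab}\bigr)=t^2\sqrt{ab}\bigl(2\sqrt{ab}-q\bigr).
\]
This is negative because $q>2\sqrt{ab}$ is equivalent to $\Gamma>1$, and it requires $q>0$. The linear term $q_0N_0y$ is nonnegative along this ray, so $\Pi\to\infty$ quadratically in $t$. The attack therefore scales until some exogenous cap or limit binds.

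\textbf{Case $\Gamma=1$.} Now $q=2\sqrt{ab}$, and completing the square gives $Q=(\sqrt a\,y-\sqrt b\,N)^2$. This vanishes exactly on the ray $N=\sqrt{a/b}\,y$, $y\ge0$. Along that ray $\Pi=q_0N_0y$, which is unbounded linearly if $q_0N_0>0$ and identically zero if $q_0N_0=0$. This establishes the final two sentences of the theorem.

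The computations are routine. The one point needing care is the positivity and domain check. I will note that the maximizing direction in the $\Gamma>1$ and $\Gamma=1$ cases lies in the nonnegative orthant, so the sign restriction on notional never rescues boundedness. In the $\Gamma<1$ case, I will confirm that the unconstrained optimum is feasible, so that unconstrained concavity delivers uniqueness.
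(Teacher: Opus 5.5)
Your proposal is correct and takes essentially the same route as the paper. Both arguments rest on the sign of $4ab-q^2$ deciding whether the quadratic part is definite, solve the two first-order conditions when $\Gamma<1$, and then check a positive-curvature direction when $\Gamma>1$ and the zero-curvature ray, carrying the endowed linear term, when $\Gamma=1$. Your explicit nonnegative direction $t(\sqrt b,\sqrt a)$, the completed square $(\sqrt a\,y-\sqrt b\,N)^2$, and the feasibility check $N^\ast\ge 0$ just spell out details that the paper's proof leaves implicit.
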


\begin{proof}
The Hessian of \eqref{eq:two-ledger-payoff} in $(y,N)$ is negative definite exactly when
$4ab-q^2>0$.  Under that condition, the first order conditions are $2ay-qN=q_0N_0$ and $2bN=qy$.
Solving gives \eqref{eq:two-ledger-controls}, and substitution gives \eqref{eq:two-ledger-value}.  If
$q^2>4ab$, the homogeneous quadratic form has a positive eigenvalue.  At the boundary, evaluate the
linear endowed notional term along the zero curvature ray.  It is positive exactly when $q_0N_0>0$.
\end{proof}

The factor $(1-\Gamma)^{-1}$ is a tax base amplification multiplier.  Endowed notional creates a finite seed value even when created notional alone is not locally profitable.  Cheap hidden entry then scales up both the optimal distortion and the receiving base.  As $\Gamma$ approaches one, small changes in reaction speed or capitalization have large effects.

For two regimes with the same $a,b,$ and $\tau$, assume $m_{\Pub}>0$ and
$\zeta_{\Pub},\zeta_{\Enc}<1$.  The security index ratio is then
\begin{equation}
  \frac{\Gamma_{\Enc}}{\Gamma_{\Pub}}
  =
  \underbrace{\left(\frac{m_{\Enc}}{m_{\Pub}}\right)^2}_{\text{attacker blindness}}
  \underbrace{\left(\frac{g_{\Enc}}{g_{\Pub}}\right)^2}_{\text{correction shielding}}
  \underbrace{\left(
    \frac{1-\zeta_{\Enc}}{1-\zeta_{\Pub}}
  \right)^2}_{\text{capitalization shielding}}.
  \label{eq:Gamma-decomposition}
\end{equation}
When both regimes satisfy $\Gamma<1$ and $V_{\Pub}>0$, their finite expected values satisfy
\begin{equation}
  \frac{V_{\Enc}}{V_{\Pub}}
  =
  \left(
    \frac{m_{\Enc}g_{\Enc}}{m_{\Pub}g_{\Pub}}
  \right)^2
  \frac{1-\Gamma_{\Pub}}{1-\Gamma_{\Enc}}.
  \label{eq:value-decomposition}
\end{equation}
Equation~\eqref{eq:Gamma-decomposition} is the central composition result.  Encryption can reduce the
first term while increasing the second and third.  For local tax base scalability, encryption is harmful exactly when
\begin{equation}
  \frac{m_{\Enc}}{m_{\Pub}}
  >
  \frac{g_{\Pub}}{g_{\Enc}}
  \frac{1-\zeta_{\Pub}}{1-\zeta_{\Enc}},
  \label{eq:Gamma-threshold}
\end{equation}
under unchanged costs and pass through.

The capitalization channel can make this threshold much lower than the endowed notional threshold \eqref{eq:information-threshold}.  For example, if public visibility capitalizes $80\%$ of predictable funding, encrypted same batch entry capitalizes none, and the reaction gap factor is $1.8$, then $\Gamma_{\Enc}>\Gamma_{\Pub}$ whenever the attacker retains more than roughly $11\%$ of its public information effectiveness.  Again, this is a model illustration rather than a venue estimate.  It shows why a mechanism that hides both the signal trade and the recipient's entry can differ sharply from one that merely hides a victim's order.

\begin{figure}[t]
\centering
\begin{tikzpicture}
\begin{axis}[
  width=0.78\textwidth,
  height=5.4cm,
  xmin=0,xmax=0.22,
  ymin=0,ymax=1.02,
  xlabel={reaction gap $\delta/W$},
  ylabel={critical $m_{\Enc}/m_{\Pub}$},
  legend pos=north east,
  axis lines=left,
  samples=160,
  scaled ticks=false,
  xtick={0,0.05,0.10,0.15,0.20},
  xticklabels={0,0.05,0.10,0.15,0.20},
  grid=major,
  grid style={dotted}
]
\addplot[thick,domain=0:0.22] {((1-exp(-8))/8)/(x+(1-exp(-8*(1-x)))/8)};
\addlegendentry{$\zeta_{\Enc}=0.8$}
\addplot[thick,dashed,domain=0:0.22] {((1-exp(-8))/8)/(x+(1-exp(-8*(1-x)))/8)*(0.2/0.6)};
\addlegendentry{$\zeta_{\Enc}=0.4$}
\addplot[thick,dashdotted,domain=0:0.22] {((1-exp(-8))/8)/(x+(1-exp(-8*(1-x)))/8)*0.2};
\addlegendentry{$\zeta_{\Enc}=0$}
\end{axis}
\end{tikzpicture}
\caption{Critical attacker information retention for $\Gamma_{\Enc}>\Gamma_{\Pub}$, with
$\rho W=8$, no closed stage correction, and $\zeta_{\Pub}=0.8$.  Reduced capitalization can make the
tax base channel dominate substantial attacker blindness.}
\label{fig:critical-information}
\end{figure}

\section{Reveal, correct, measure, then pay}
\label{sec:design}

The result is not an argument against encrypted mempools.  It is an argument against composing a commit then reveal ordering layer with an application that immediately treats the first post-reveal state as a payment oracle.  The application should insert an economically meaningful stage between reveal and payment.

\subsection{A decryption aware rule}

Consider the following sequence.

\begin{enumerate}[leftmargin=1.25cm,label=\textbf{Step \arabic*.}]
\item \textbf{Order and reveal.}  Commit the encrypted batch, reveal its transactions, and execute or
simulate the resulting state transition.
\item \textbf{Correct.}  Permit a public reaction period of length $L$, a reaction auction, or a
protocol native rebalancing step.  The reaction mechanism must be able to condition on the revealed
state.
\item \textbf{Measure.}  Begin the funding signal window only after the correction stage, or exclude
the reveal-induced transient from the window.
\item \textbf{Pay on an aligned tax base.}  Positions opened in the concealed batch do not receive the imminent event.  Alternatively, receiving notional accrues continuously over the same post-reveal window used to measure the signal rather than at one final timestamp.
\end{enumerate}

The rule can be implemented in several ways.  A chain can add a reveal then clear subround.  An
application can delay oracle sampling by a fixed or state dependent number of blocks.  A protocol native arbitrage module can rebalance against an external price before funding is finalized.  Or the funding contract can determine event eligibility from a pre-batch snapshot and accrue newly opened positions only from the next event.  These mechanisms differ in liveness and market-quality costs, but they target the two channels identified above.

\subsection{An explicit correction buffer}

Suppose that, at reveal, the residual hidden order distortion is $y$ and that its conditional mean
satisfies $z_t\leq y e^{-\rho t}$ after reveal for some $\rho>0$.  The application waits $L$ and then
measures over a window of length $W$.  In the absence of further hidden intervention, the expected
average measured distortion satisfies
\begin{equation}
  \bar z_{L,W}
  \leq y\,h(L;\rho,W),
  \qquad
  h(L;\rho,W)
  :=e^{-\rho L}\frac{1-e^{-\rho W}}{\rho W}.
  \label{eq:buffer-response}
\end{equation}

\begin{theorem}[Decryption aware dollar bound]
\label{thm:buffer-bound}
Assume:
\begin{enumerate}[leftmargin=1.15cm,label=(\roman*)]
\item the residual hidden order distortion at reveal has local creation cost $ay^2$;
\item the conditional mean distortion after reveal decays at exponential rate at least $\rho$;
\item signal to payment pass-through is at most $\tau$;
\item endowed receiving notional is at most $\overline N$;
\item positions created in the concealed batch are ineligible for the current event.
\end{enumerate}
Then the maximum expected incremental funding value generated by the concealed batch residual distortion is bounded by
\begin{equation}
  \varepsilon_{\mathrm{priv}}(L)
  =
  \frac{m^2\overline N^2\tau^2}{4a}
  e^{-2\rho L}
  \left(\frac{1-e^{-\rho W}}{\rho W}\right)^2.
  \label{eq:privacy-bound}
\end{equation}
For every $\varepsilon>0$, to guarantee
$\varepsilon_{\mathrm{priv}}(L)\le\varepsilon$, it is sufficient to choose
\begin{equation}
  L\ge
  \frac{1}{2\rho}
  \pos{
    \log\left[
      \frac{m^2\overline N^2\tau^2}{4a\varepsilon}
      \left(\frac{1-e^{-\rho W}}{\rho W}\right)^2
    \right]
  }.
  \label{eq:buffer-rule}
\end{equation}
\end{theorem}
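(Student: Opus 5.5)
The plan is to reduce the theorem to the endowed notional optimization of Proposition~\ref{prop:endowed-value}, with the response factor $g$ replaced by the buffered factor $h(L;\rho,W)$ of \eqref{eq:buffer-response}, and then invert the resulting bound for $L$.

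\textbf{Step 1: eliminate created notional.} Assumption (v) makes positions opened in the concealed batch ineligible for the current event. Hence the created notional term $qNy$ in \eqref{eq:two-ledger-payoff} vanishes for this event. Any such $N$ only contributes the nonnegative cost $bN^2$, so setting $N=0$ is weakly optimal. The tax base channel and $\Gamma$ therefore play no role, and the attacker's incremental payoff is bounded by the endowed term alone.

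\textbf{Step 2: bound the per unit transfer.} By assumption (ii), $z_t\le y e^{-\rho t}$ after reveal. Integrating over the measurement window $[L,L+W]$ gives $\bar z_{L,W}\le y\,h(L;\rho,W)$, which is \eqref{eq:buffer-response}. The expected transfer is linear in the measured signal, as in the derivation of \eqref{eq:endowed-payoff}. Combining this with the targeting factor $m$, pass-through at most $\tau$ from (iii), and notional at most $\overline N$ from (iv), I get
\[
\Pi(y)\le m\overline N\tau h\,y-a y^2 .
\]
One point needs care: each factor multiplying $y$ is nonnegative, and we work on the favorable branch $y\ge0$. So replacing each factor by its upper bound can only increase the linear term. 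I expect this monotone-replacement step, together with the correct handling of $m$ and the favorable sign branch, to be the only real subtlety. The rest is algebra.

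\textbf{Step 3: maximize the bound.} Maximizing the concave quadratic over $y$ exactly as in Proposition~\ref{prop:endowed-value} gives
\[
\sup_y \Pi(y)\le \frac{m^2\overline N^2\tau^2h^2}{4a}.
\]
Substituting $h^2=e^{-2\rho L}\bigl((1-e^{-\rho W})/(\rho W)\bigr)^2$ yields \eqref{eq:privacy-bound}.

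\textbf{Step 4: invert for the buffer.} Write $K=\frac{m^2\overline N^2\tau^2}{4a}\bigl((1-e^{-\rho W})/(\rho W)\bigr)^2$, so that $\varepsilon_{\mathrm{priv}}(L)=Ke^{-2\rho L}$. The condition $Ke^{-2\rho L}\le\varepsilon$ is equivalent to $L\ge\frac{1}{2\rho}\log(K/\varepsilon)$ when $K>\varepsilon$. When $K\le\varepsilon$, every $L\ge0$ already works. Taking the positive part $\pos{\cdot}$ covers both cases, including $K=0$, where the logarithm argument vanishes and any $L\ge 0$ suffices; the positive part is read as $0$ there. This gives \eqref{eq:buffer-rule}.
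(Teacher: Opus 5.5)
Your proposal is correct and follows the paper's proof. The eligibility rule removes the created-notional term, Proposition~\ref{prop:endowed-value} is applied with the response factor bounded by $h(L;\rho,W)$ and with $N_0\le\overline N$, and the resulting bound is inverted for $L$; your monotone-replacement argument and your handling of the case $K\le\varepsilon$ via the positive part spell out steps the paper leaves implicit.
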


\begin{proof}
Under the stated eligibility rule, concealed batch entry does not contribute created notional to the
current event, so the privacy specific tax base amplification term is absent.
Proposition~\ref{prop:endowed-value} applies with response factor at most $h(L;\rho,W)$ and
$N_0\le\overline N$, yielding \eqref{eq:privacy-bound}.  Solving the inequality for $L$ gives
\eqref{eq:buffer-rule}.
\end{proof}

The theorem bounds the incremental transfer generated by the concealed batch residual.  It does not claim that all post-reveal manipulation disappears.  An attacker can continue trading publicly after reveal, those trades belong in the ordinary oracle manipulation and dollar security analysis.  The role of the buffer is to ensure that the protocol does not grant a special uncorrected payment window solely because the attack was concealed during ordering.

\subsection{Why both controls are needed}

A correction buffer and an eligibility rule address different ledgers.

\begin{table}[t]
\centering
\small
\begin{tabularx}{\textwidth}{>{\bfseries\raggedright\arraybackslash}p{0.25\textwidth} Y Y}
\toprule
Design & Signal ledger & Tax base ledger \\
\midrule
Immediate post-reveal measurement
& Preserves the reaction gap distortion.
& Concealed entrants may receive the event. \\
\addlinespace
Correction buffer only
& Reduces $g$ or $h(L)$.
& A final timestamp rule can still make entry cheap. \\
\addlinespace
Eligibility lag only
& Removes same batch created notional from the event.
& Endowed notional still profits from an uncorrected signal. \\
\addlinespace
Reveal correct measure pay
& Lets adaptive arbitrage act before measurement.
& Excludes concealed entrants or accrues notional over the public measurement window. \\
\addlinespace
Exogenous non-manipulable oracle
& Removes or sharply weakens the self-authored signal channel.
& Tax base timing still matters for predictable transfers. \\
\bottomrule
\end{tabularx}
\caption{The signal and tax base controls are complements, not substitutes.}
\label{tab:controls}
\end{table}

An eligibility lag alone sets the privacy specific created notional coefficient $q$ to zero but leaves $q_0$ intact for endowed positions.  A signal buffer alone reduces both $q_0$ and $q$ but can leave $\zeta$ near zero for a timestamp entrant.  The combined rule is therefore the natural application level composition of transaction privacy with a state contingent transfer.

\section{Numerical illustrations and design implications}
\label{sec:numerical}

Table~\ref{tab:reaction-numbers} reports dimensionless reaction gap effects with no closed stage correction. The first ratio is the measured distortion multiplier $R_g$; the second is the endowed notional value multiplier when $m$ is unchanged; the final column is the fraction of public targeting ability the attacker must retain for encryption to increase endowed attack value.

\begin{table}[h]
\centering
\small
\begin{tabular}{c c c c c}
\toprule
$\rho W$ & $\delta/W$ & $R_g$ & $R_g^2$ & critical $m_{\Enc}/m_{\Pub}$ \\
\midrule
2  & 0.10 & 1.197 & 1.432 & 0.836 \\
8  & 0.05 & 1.400 & 1.960 & 0.714 \\
8  & 0.10 & 1.800 & 3.239 & 0.556 \\
20 & 0.05 & 2.000 & 4.000 & 0.500 \\
20 & 0.10 & 3.000 & 9.000 & 0.333 \\
\bottomrule
\end{tabular}
\caption{Synthetic reaction gap comparative statics for $\rho_c=0$.}
\label{tab:reaction-numbers}
\end{table}

Several design implications follow.

\paragraph{Fast decryption is necessary but not sufficient.}
Low cryptographic overhead improves user latency and reduces one component of wall clock delay.  But if the application closes adaptive admission before reveal, the remaining one round ordering barrier can still be material. A systems benchmark should therefore report not only proposal to execution latency but also the earliest stage at which a transaction contingent on the revealed state can execute.

\paragraph{Protocol native correction is especially valuable under privacy.}
A precommitted arbitrage module, solver auction, or state rebalancing hook can move correction
capacity from $\mathcal A$ into $\mathcal C$.  Moving a fully covered type $i$ in this way raises
$\rho_c$ by $\nu_i\theta_i$ while leaving $\rho_p$ unchanged.  It therefore narrows the correction
deficit during the reaction gap.  Adding a new correction technology that remains active in both
stages raises both rates.  Reducing $\delta$ attacks the timing margin directly.  Which agents can
supply these forms of correction is itself shaped by order flow and searcher competition in block
building \cite{GuptaPaiResnick2023,MamageishviliEtAl2024}.

\paragraph{Randomized permutation mainly acts through $m$.}
Permutation can make a manipulator less certain about relative order and interactions with other hidden transactions, lowering targeting retention.  This can be strongly protective.  It does not automatically restore an adaptive correction opportunity after the plaintext is known.  In the model, permutation primarily lowers $m$; a reveal then clear stage primarily lowers $g$ and raises effective visibility $v$. The controls are complementary.

\paragraph{Long funding windows do not make the issue disappear mechanically.}
A small $\delta/W$ can still matter when $\rho W$ is large.  The relevant dimensionless quantity in the fast correction regime is approximately $\rho\delta$, not merely the fraction of the funding window lost. That said, if the signal is derived from a deep external oracle and the hidden venue transaction barely moves it, the effective $a$ is large and the channel is weak.

\paragraph{Dollar security requires concentration limits.}
All attack values scale with $N_0^2$.  An application can reduce the required correction buffer through account, coalition, or market wide limits on receiving notional.  A rate cap is not a complete substitute, it bounds basis points, while the dollar transfer still scales with the attackable tax base.

\section{Conclusion}
Encrypted mempools remove information from would be attackers, but they also remove information from
correctors and competing entrants.  For victim dependent MEV, the first effect can dominate.  For
self-authored state manipulation, the attacker may retain most of the information it needs while the
market loses the opportunity to respond. Perpetual funding makes the composition failure visible.  A
hidden transaction can affect the signal that sets a transfer rate, while endowed or newly created
open interest supplies the receiving base.

The correction sector makes the missing market response explicit.  The difference
$\rho_p-\rho_c$ is the weighted correction capacity that gains actionable information or execution
access after reveal.  The reaction gap factor $g$ prices the delay before that capacity can act.  The
capitalization coefficient $\zeta$ prices how much predictable funding is bid into entry.  Their
combination yields a security decomposition into attacker blindness, correction shielding, and
capitalization shielding.

The practical rule is simple.  Do not reveal and immediately pay against the revealed state. Reveal,
permit correction or clearing, measure after the transient, and align position eligibility with the
public measurement window.  Transaction privacy and market integrity are compatible, but only when
the application is designed for the information schedule that encryption creates.

\end{document}